\theoremstyle{plain}
\newtheorem{theorem}{Theorem}
\newtheorem{lemma}{Lemma}
\newtheorem{remark}{Remark}
\newtheorem{corollary}{Corollary}
\newtheorem{fact}{Fact}
\newtheorem{assumption}{Assumption}
\newtheorem{definition}{Definition}
\newcommand{\disc}{\mathbf{D}}
\newcommand{\E}{\mathbf{E}}
\newcommand{\N}{N}
\newcommand{\curly}[1]{\left \{ #1 \right\}}
\newcommand{\round}[1]{\left ( #1 \right)}
\newcommand{\RNum}[1]{\uppercase\expandafter{\romannumeral #1\relax}}
\title{THE DISSIPATIVE SPECTRAL FORM FACTOR FOR I.I.D. MATRICES}
\author{Giorgio Cipolloni}
\address{Princeton Center for Theoretical Science, Princeton University, Princeton, NJ 08544, USA}
\curraddr{}
\email{gc4233@princeton.edu}
\thanks{}
\author{Nicolo Grometto}
\address{Sherrerd Hall, Princeton University, Princeton, NJ 08540, USA}
\curraddr{}
\email{ng1069@princeton.edu}
\thanks{}
\subjclass[2010]{Primary }
\keywords{Dissipative Spectral Form Factor, Local Law, Central Limit Theorem}
\date{\today}
\begin{document}
\maketitle

\vspace{-2em}
\begin{abstract}
The \emph{Dissipative Spectral Form Factor} (DSFF), recently introduced in \cite{li2021spectral} for the Ginibre ensemble, is a key tool to study universal properties of dissipative quantum systems. In this work we compute the DSFF for a large class of random matrices with real or complex entries up to an intermediate time scale, confirming the predictions from \cite{li2021spectral}. The analytic formula for the DSFF in the real case was previously unknown. Furthermore, we show that for short times the connected component of the DSFF exhibits a non-universal correction depending on the fourth cumulant of the entries. These results are based on the central limit theorem for linear eigenvalue statistics of non-Hermitian random matrices \cite{cipolloni2021fluctuation, cipolloni2019central}.
\end{abstract}










\section{Introduction}
Non-Hermitian physics has significantly advanced in recent years, leading to a deeper understanding of open (dissipative) quantum systems \cite{deng2010exciton, muller2012engineered, ritsch2013cold, sieberer2016keldysh, chou2011non}, optics \cite{feng2017non, el2018non}, biological systems \cite{may1972will, marchetti2013hydrodynamics}, acoustics \cite{ma2016acoustic, cummer2016controlling}, and many more. The relaxation of the Hermiticity assumption led to the discovery of new interesting phenomena including: non--Hermitian skin--effect \cite{song2019non}, new universality classes \cite{hamazaki2020universality}, dynamical phase transition \cite{kawabata2023entanglement}, replica symmetry breaking in the Sachdev-Ye-Kitaev (SYK) model \cite{garcia2022dominance}, and violation of the Eigenstate Thermalization Hypothesis \cite{cipolloni2023entanglement, cipolloni2023non}. In analogy with the Hermitian case, it is expected that spectral statistics of non-Hermitian systems exhibit universal behavior. More precisely, in \cite{grobe1988quantum, grobe1989universality}, the authors formulated the dissipative analogs of the Berry-Tabor and Bohigas-Giannoni-Schmit conjectures: chaotic systems follow Random Matrix statistics, while integrable systems follow Poisson statistics. To better understand this phenomena, Li, Prosen, and Chan introduced the so called \emph{Dissipative Spectral Form Factor} \cite{li2021spectral} (see also \cite[Section 6.2]{byun2022progress} for a recent review). For a non-Hermitian operator $X$, with complex eigenvalues $\curly{\sigma_i = x_i + iy_i}_{i = 1}^{\N}$, the \emph{Dissipative Spectral Form Factor} (DSFF), introduced in \cite{li2021spectral}, for a complex time parameter $\tau := t + is$, with $t, s \in \mathbf{R}$, is given by
\begin{align}
\label{eq:DSFF}
    \mathbf{DSFF}\round{t,s} := \:\frac{1}{\N^2}\sum_{i,j = 1}^{\N} e^{it \round{x_i - x_j}+is\round{y_i - y_j}}.
\end{align}
In the case when $X$ is a random matrix we consider its expectation
\begin{equation}
\label{eq:defkts}
    K_{\mathbf{F}}(t,s) := \E[\mathbf{DSFF}\round{t,s}],
\end{equation}
with $\mathbf{F}\in \{\mathbf{R},\mathbf{C}\}$ denoting the fact that $X$ has real or complex entries. The DSFF consists of the two dimensional Fourier transform of the two point correlation function of $X$ given by $\rho(z)\rho(z+w)$; in particular, as $\tau$ varies, it studies the correlations of the eigenvalues of $X$ on all scales at the same time. Note that the DSFF reduces to the Hermitian \emph{Spectral Form Factor} (SFF) \cite{leviandier1986fourier}
\begin{equation}
\label{eq:HermSFF}
\mathbf{SFF}(t):=\frac{1}{\N^2}\sum_{i,j = 1}^{\N} e^{it\round{x_i - x_j}},
\end{equation}
when the spectrum of $X$ is real. Additionally, by rewriting $\tau = |\tau|\round{\cos \theta + i \sin \theta}$ and denoting $z_{ij} := \round{x_i-x_j} + i\round{y_i-y_j}$, we may also write $K_{\mathbf{F}}(t,s)=K_{\mathbf{F}}(\tau, \tau^*)=\E \N^{-2} \sum_{i,j = 1}^{\N} e^{i \langle z_{ij}, \tau \rangle}$, which for a fixed angle $\theta$ offers a natural interpretation of the DSFF as SFF of the projection of $\curly{z_{ij}}_{ij}$ onto the radial axis relative to $\theta$. In particular, heuristically, one can think of the DSFF at time $\tau$ as a statistic of the eigenvalues of $X$ which studies the spectrum projected onto the axis relative to $\theta$ on a scale $\sim 1/|\tau|$. Throughout this paper, we will make use of the notation $K_{\mathbf{F}}(\tau, \tau^*)$, rather than $K_{\mathbf{F}}(s,t)$, to stress the dependence on $\tau$ as the underlying complex time parameter.




Before describing some properties of $K(\tau,\tau^*)$, we recall some properties and results about the well known Hermitian SFF \eqref{eq:HermSFF}. As a function of $t$, for chaotic systems, $K(t):=\E[\mathbf{SFF}(t)]$ exhibits the so called \emph{slope--dip--ramp--plateau} behavior (see e.g. \cite[Figure 1]{cipolloni2023spectral}): for short times $K(t)$ decreases with an oscillatory behavior until a "dip--time", $t_{\mathrm{dip}}\sim \N^{1/2}$, then in the regime $\N^{1/2}\lesssim t\lesssim \N$ the SFF grows linearly until the Heisenberg time $t_{\mathrm{Hei}}\sim \N$ when $K(t)$ becomes flat and stays equal to $1/\N$ for $t\ge t_{\mathrm{Hei}}$. We point out that $t_{\mathrm{Hei}}$ is proportional to the inverse level spacing, which is $\sim 1/N$ in the Hermitian case. Despite its great relevance within the physics literature on disordered quantum systems \cite{cotler2017black, garcia2018universality, garcia2017analytical, jia2020spectral, saad2018semiclassical}, the SFF was not mathematically rigorously investigated until very recently. In \cite{forrester2021differential, forrester2021quantifying}, Forrester computed the large $\N$ limit of $K(t)$ for the Gaussian Unitary Ensemble (GUE) and for the Laguerre Unitary Ensemble (LUE), respectively, in the entire \emph{slope--dip--ramp--plateau} regime, relying on the integrable structure of these models (see the remarkable identities in \cite{brezin1997spectral, okuyama2019spectral}). More recently, $K(t)$ has been computed also for the Dyson Brownian motion on the unitary group $U(\N)$ \cite{forrester2022dip}. Only very recently, in \cite{cipolloni2023spectral}, the SFF has been studied for more general Hermitian random matrix models (i.e. for models with entries which are not necessarily Gaussian). In this case, unlike previous works, no exact identities are available, so the SFF was analyzed relying on the recent \emph{multi-resolvent local laws} (see e.g. \cite{cipolloni2022optimal, cipolloni2022thermalisation}), which allowed a rigorous computation of $K(t)$ up to the intermediate time scales $t\ll \N^{5/11}$. The understanding of $K(t)$, for the whole $t\lesssim N$ regime, for matrices with non Gaussian entries is still completely missing.


Much less is known in the harder dissipative (non--Hermitian) case. In \cite{li2021spectral} the authors computed $K_{\mathbf{C}}(\tau,\tau^*)$ for $X$ being a complex Ginibre matrix and they numerically conjectured that a similar behavior is expected for more general classes of non-Hermitian chaotic operators. They also numerically compute the DSFF for $X$ drawn from the real or quaternionic Ginibre ensemble, but no analytical results are available in \cite{li2021spectral} for these cases. In more recent works, the DSFF has been numerically computed also for the dissipative version of the celebrated Sachdev-Ye-Kitaev (SYK) model \cite{garcia2023universality} and for other interacting non-Hermitian systems \cite{ghosh2022spectral}. We also point out that very recently in \cite[Section 5.4]{matsoukas2023non} the authors introduced a different eigenvalue statistic to detect if a non--Hermitian Hamiltonian is chaotic, the \emph{Deformed Spectral Form Factor}\footnote{Not to be confused with the \emph{Dissipative Spectral Form Factor} (abbreviated to DSFF), which is considered in this paper and was introduced in \cite{li2021spectral}.} (see also \cite{matsoukas2023quantum} for its extension to non--Markovian channels).

For concreteness, we now focus only on the complex Ginibre ensemble. According to the predictions in \cite{li2021spectral}, the qualitative behaviour of the DSFF, as a function of $|\tau|$, also follows a \emph{slope--dip--ramp--plateau} behavior, but with fundamental different properties compared to the Hermitian SFF. At leading order, in the large $\text{N}$ limit, the DSFF for the complex Ginibre ensemble is given by
\begin{align}\label{eq:prediction_DSFF_Li}
    K_{\text{GinUE}}(\tau, \tau^*) \approx \frac{1}{\N} + 4\frac{J_1(|\tau|^2)}{|\tau|^2}-\frac{1}{\N}\exp\round{-\frac{|\tau|^2}{4\N}}
\end{align}
where $J_1$ is the Bessel function of the first kind (see Definition~\ref{def:bessel_int} below) and the three terms appearing on the right-hand side of \eqref{eq:prediction_DSFF_Li} are referred to as \emph{contact}, \emph{disconnected}, and \emph{connected} components, respectively. From (\ref{eq:prediction_DSFF_Li}), one notices that the DSFF is rotationally symmetric in the complex time $\tau$, as it depends solely on $|\tau|$.

\begin{figure}[ht]%
    \vspace{-4em}
    \subfloat{{\includegraphics[width=8.5cm]{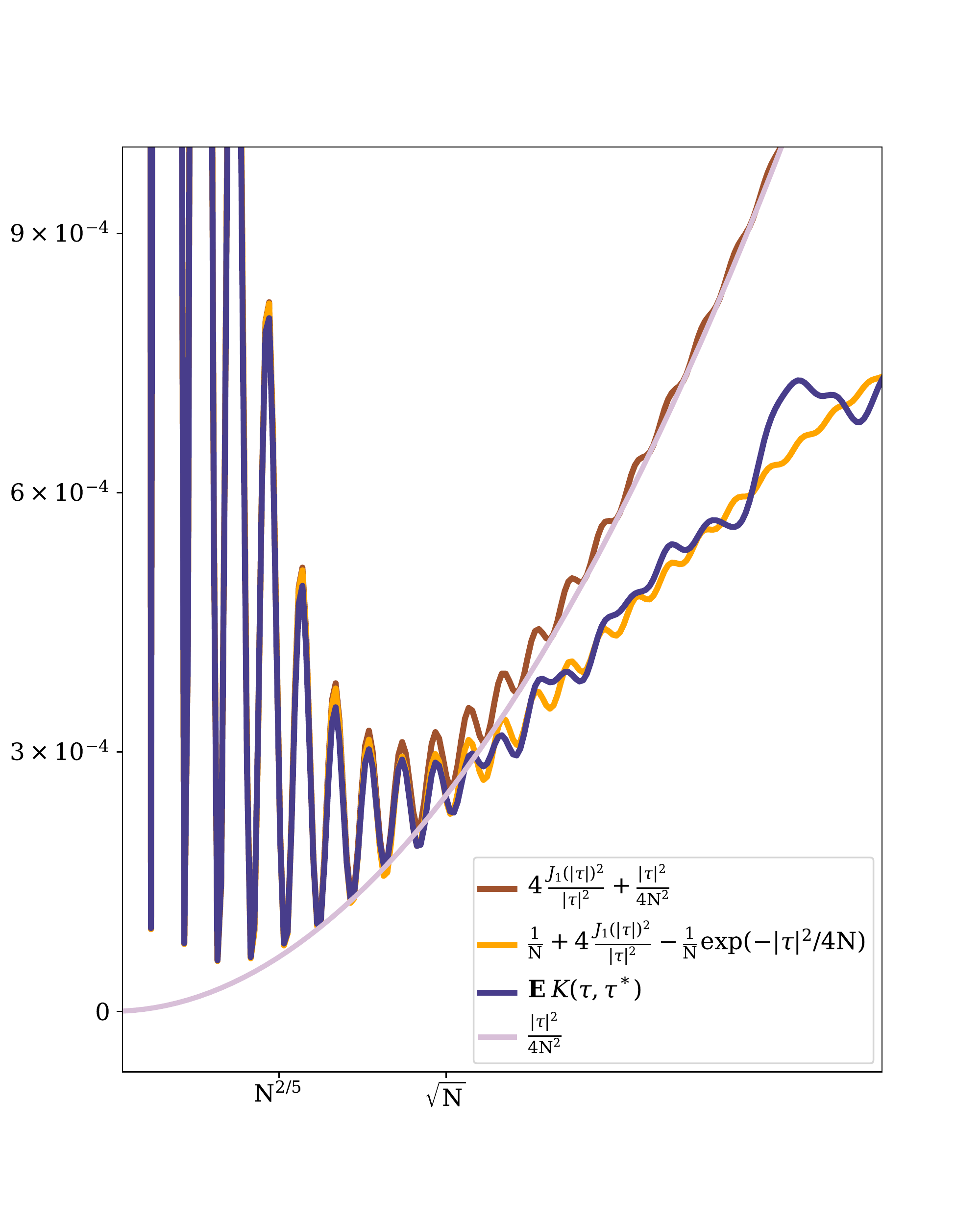} }}%
    \subfloat{{\includegraphics[width=8.5cm]{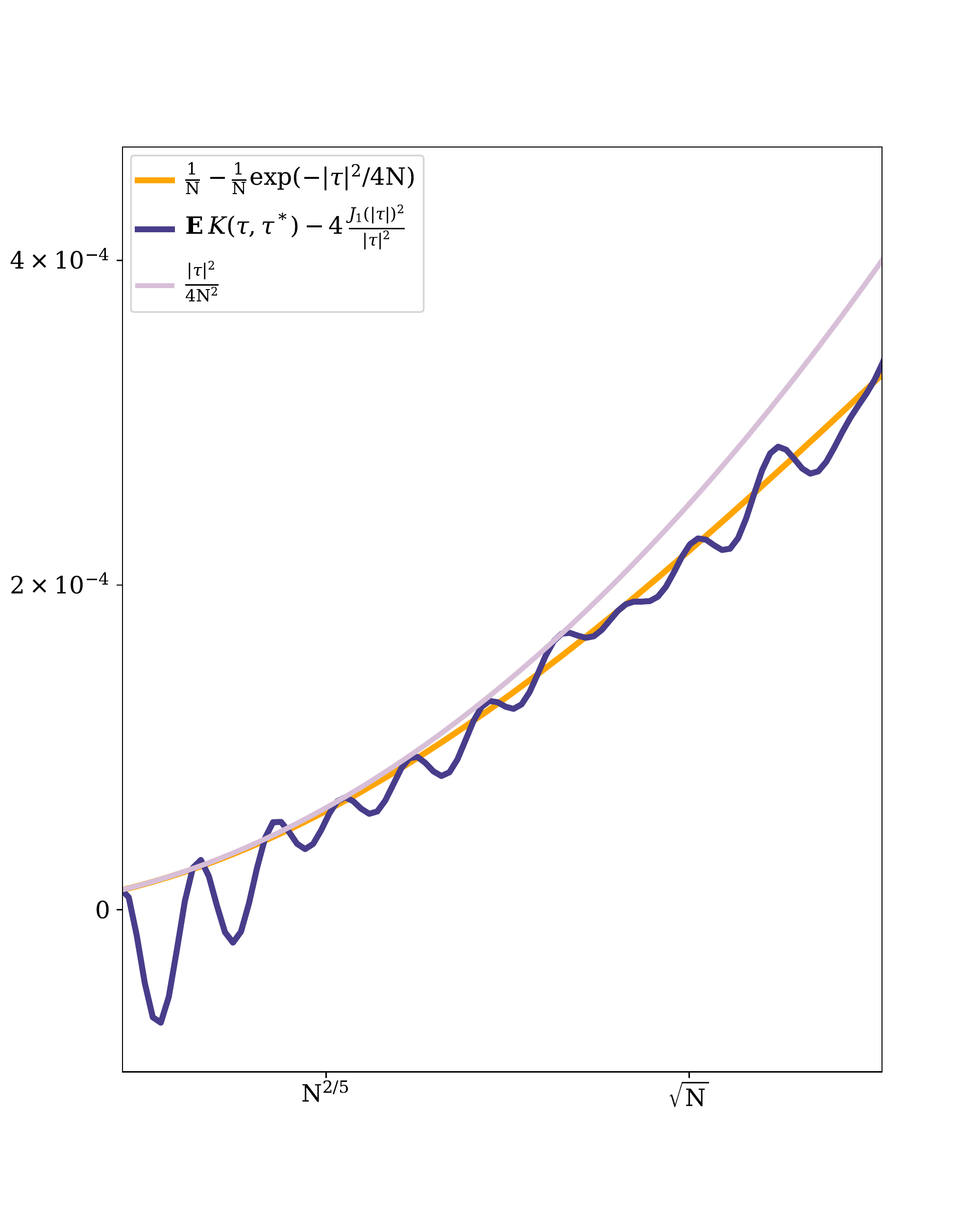} }}%
    \vspace{-3em}
    \caption{\small Numerical results in the complex case for $K(\tau, \tau^*)$ vs $|\tau|$ (matrix size $\N = 1000$, $\theta = 0$, sample size = 1000), displaying a typical realization of the dip-ramp-plateau profile of DSSF. On the right, the curves of interest are plotted without the disconnected component to highlight the quadratic growth of the DSFF in the ramp regime.}
    \label{fig:DSFF_numerics}
\end{figure}


Figure~\ref{fig:DSFF_numerics} below shows the \emph{slope--dip--ramp--plateau} behavior for the DSFF. By standard asymptotic of Bessel functions (see e.g. Fact~\ref{fact:bessel_properties} below) we notice that the Heisenberg time scales as the inverse of the mean eigenvalue spacing, that is $\tau_{\text{Hei}}\sim  \sqrt{\N}$ (in analogy with the Hermitian SFF when $t_{\mathrm{Hei}}\sim\N$). In addition, by relying on the fact that the non-oscillatory part of the disconnected component asymptotically scales as $ |\tau|^{-3}$, for $|\tau|\gg 1$, and by considering the time at which the disconnected and connected contributions are of the same order, one obtains that $\tau_{\text{edge}} \sim \N^{2/5}$ (this is the analog of $t_{\mathrm{dip}}$ in the Hermitian case). Furthermore, using that that $J_1(z)/z  \sim 1$ as $z \to 0$, one also deduces that the initial decay of the DSFF from $K(0,0) = 1$ for $|\tau|\lesssim \tau_{\text{edge}}$ is governed by the disconnected component. In the intermediate regime, for $\tau_{\text{edge}} \lesssim |\tau|\: \lesssim \tau_{\text{Hei}}$, the DSFF increases quadratically at a rate $|\tau|^2/4\N^2$, which may be seen by combining the contact component with the Taylor expansion of the exponential term around zero. This is in stark contrast to the Hermitian SFF, where the intermediate ramp behaviour exhibits linear growth with respect to time.  Finally, at time $|\tau|\:\gtrsim \tau_{\text{Hei}}$, the DSFF  has a  plateau at the mean eigenvalue spacing $1/\sqrt{\N}$, in analogy to the Hermitian case.

In \cite{li2021spectral} the authors analitically computed the asymptotic of the DSFF only in the complex Ginibre ensemble relying on its integrable structure \cite{ginibre1965statistical, mehta2004random}; for the real Ginibre ensemble only numerics are available. Although the joint distribution of the eigenvalues is explicitly known also for the real Ginibre ensemble \cite{borodin2009ginibre, forrester2007eigenvalue} (see also the recent review \cite{byun2023progress}), it is much more involved due to the special role played by the real axis, thus making its analysis less amenable. The aim of this work is to consider non-Hermitian matrices with generic (i.e. not necessarily Gaussian) real or complex entry distribution (see Assumption~\ref{asspt:iid_matrix} below). In particular, in Theorem~\ref{thrm:dsff_leading_gen} below, we rigorously prove \eqref{eq:prediction_DSFF_Li} for a large class of non-Hermitian matrices with complex entries up to intermediate time scales $|\tau|\ll \N^{2/7}$. Additionally, we give an analog of \eqref{eq:prediction_DSFF_Li} for matrices with real entries which we conjecture to hold up to $|\tau|\ll \sqrt{\N}$ (and prove for $|\tau|\ll \N^{2/7}$). More precisely, we conjecture that for $1\ll |\tau|\ll \sqrt{\N}$ it holds
\begin{equation}
\label{eq:newexpr}
    K_{\mathbf{F}}(\tau,\tau^*)\approx 4 \frac{J_1(|\tau|)^2}{|\tau|^2} + \frac{|\tau|^2}{4\N^2}+\frac{(t^2-s^2)(2/\beta-1)}{4s \N^2}J_1(2s),
\end{equation}
with $\beta$ a parameter such that $\beta=1$ in the real case and $\beta=2$ in the complex case; furthermore, $\mathbf{F}=\mathbf{R}$ for $\beta=1$ and $\mathbf{F}=\mathbf{C}$ for $\beta=2$. In particular, we show that also in the non-Hermitian case the DSFF can be used to distinguish different universality classes. We remark that the expression \eqref{eq:newexpr} in the real case was unknown even for the real Ginibre ensemble. Furthermore, in Theorem~\ref{thrm:dsff_leading_gen} below, we show that for matrices $X$ with general entry distribution there is an additional correction to the connected component of $K_{\mathbf{F}}(\tau,\tau^*)$ for $|\tau|\sim 1$ depending on the fourth moment of the entries. We remark that to compute this asymptotic for the DSFF we rely on the recent CLT-type results for linear eigenvalue statistics of non-Hermitian random matrices appearing in \cite{cipolloni2019central, cipolloni2021fluctuation} (see also \cite{cipolloni2022mesoscopic, forrester1999fluctuation, forrester2009matrix, kopel2015linear, nourdin2010universal, o2016central, rider2006gaussian, rider2007noise, tao2015random} for other CLT results for linear statistics of non--Hermitian eigenvalues).

\subsection{Notation and Conventions} For positive quantities $f,g$, the notation $f\sim g$ is used to indicate asymptotic equivalence up to multiplicative constants, i.e. that there exist constants $c,C > 0$, such that $c \leq f/g \leq C$. When $c,C$ may be taken to be both equal to one in some specified limit, we write $f\approx g$. Analogously, $f \ll g, \: f \lesssim g$ are used to indicate that $f/g \to 0, \: f/g \leq C$, respectively. In addition, we make use of standard asymptotic notation, according to $f = O(g), \: f = o(g)$. By $\mathbf{D}\subset \mathbf{C}$ we denote the open unit disk, and for any $z \in \mathbf{C}$, we use the notation $d^2 z := 2^{-1}i\round{dz  \wedge  \overline{z}}$ to denote the two dimensional volume form on $\mathbf{C}$.

\section{Main Results}

We consider $\N\times \N$ non-Hermitian matrices $X$ satisfying the following assumption.

\begin{assumption}\label{asspt:iid_matrix}
Let $X$ be an $\N \times \N$ matrix with real or complex independent identically distributed (i.i.d.) entries $X_{ij} \stackrel{d}{=} \N^{-1/2}\chi$. The random variable $\chi$ is such that $\E \chi =0$, $\E |\chi|^2=1$; additionally, in the complex case we also assume that $\E \chi^2 = 0$. Furthermore, we assume that for all $p \in \mathbf{N}$, there exists a constant $C_p > 0$ such that $\E |\chi|^p \leq C_p$.
\end{assumption}






The main result of this paper is to prove the asymptotic of the DSFF for a large class of models satisfying Assumption~\ref{asspt:iid_matrix}, up to some intermediate time $|\tau|\ll \N^{2/7}$.






\begin{theorem}\label{thrm:dsff_leading_gen}
Let $X$ be a real or complex i.i.d. matrix satisfying Assumption~\ref{asspt:iid_matrix}. For $s,t \in \mathbf{R}$, and $\tau = t+is$ let $K_{\mathbf{F}}(\tau,\tau^*)=K_{\mathbf{F}}(t,s)$ be defined as in \eqref{eq:DSFF}. Then for $0 \le |\tau|\ll \N^{2/7}$ we have
\begin{equation}
\label{eq:DSFFmaintheonew}
    K_{\mathbf{F}}(\tau,\tau^*)=\left[e(\tau,\tau^*)^2+\frac{v(\tau,\tau^*)}{\N^2}\right](1+o(1)), \qquad\quad \mathbf{F}\in \{\mathbf{R},\mathbf{C}\},
\end{equation}
where
\begin{equation}
\begin{split}
\label{eq:expvar}
  e(\tau,\tau^*):&=  2\frac{ J_1(|\tau|)}{|\tau|}- \frac{ |\tau|J_1(|\tau|)}{4N}+ 4\kappa_4 \frac{J_3\round{|\tau|}}{\N |\tau|} \\
  &\quad+\frac{2/\beta-1}{\N}\left(\frac{1}{4\pi}\normalcolor\int_{\mathbf{D}}\frac{e^{itx}(1-e^{isy})}{y^2}dxdy-J_0(|\tau|)+\frac{J_0(t)}{2}+\frac{\cos(t)}{2}\right) \\
  v(\tau,\tau^*):&= \frac{|\tau|^2}{4}+ \kappa_4 \round{2\frac{J_1(|\tau|)}{|\tau|} - J_0\round{|\tau|}}^2+ \sum_{k \in \mathbf{Z}}|k|[\beta-1+(2\beta-1)|\sin(\varphi k)|^2] J_k\round{|\tau|}^2 \\
  &\quad+\frac{(t^2-s^2)(2/\beta-1)}{4s}J_1(2s),
\end{split}
\end{equation}
with $\kappa_4 := \E |\chi|^4 -(1+2/\beta)$ denoting the fourth cumulant of the entries of $X$,  and the angle $\varphi=\varphi(t,s)$ defined so that
\[
\sin \varphi = t/\sqrt{t^2+s^2} \hspace{3em} \cos \varphi = s/\sqrt{t^2+s^2}.
\]
Here $\beta$ is a parameter such that $\beta=1$ in the real case and $\beta=2$ in the complex case.
\end{theorem}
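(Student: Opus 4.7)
The DSFF \eqref{eq:DSFF} admits the compact form
\begin{equation*}
K_{\mathbf{F}}(\tau,\tau^*) = \frac{1}{\N^2}\E\abs{L_\N(f_\tau)}^2, \qquad L_\N(f_\tau):=\sum_{i=1}^{\N}f_\tau(\sigma_i), \qquad f_\tau(z):=e^{i(tx+sy)},
\end{equation*}
with $z=x+iy$. The elementary splitting
\begin{equation*}
\E\abs{L_\N(f_\tau)}^2 = \abs{\E L_\N(f_\tau)}^2 + \E\abs{L_\N(f_\tau)-\E L_\N(f_\tau)}^2
\end{equation*}
reduces the problem to (i) a subleading expansion of the mean $\E L_\N(f_\tau)$ (producing the factor $\N\cdot e(\tau,\tau^*)$), and (ii) a computation of the centred second moment of $L_\N(f_\tau)$ to precision $o(1)$ (producing $v(\tau,\tau^*)$). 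Both tasks are addressed by plugging the explicit test function $f_\tau$ into the non-Hermitian CLT machinery developed in \cite{cipolloni2019central,cipolloni2021fluctuation}, whose backbone is Girko's Hermitization of $X-z$ together with single- and multi-resolvent local laws for the Hermitization matrix at one and two spectral parameters.

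For task (i), the circular-law leading order is $\N\int_{\disc} f_\tau(z)\,d^2z/\pi = 2\N J_1(\abs{\tau})/\abs{\tau}$, computed in polar coordinates using $\int_0^{2\pi}e^{i\abs{\tau} r\cos\theta}d\theta=2\pi J_0(\abs{\tau} r)$ and $\int_0^a uJ_0(u)du = aJ_1(a)$. The $O(1)$ corrections split as follows: a Laplacian-type contribution which, via $\Delta f_\tau = -\abs{\tau}^2 f_\tau$, yields the $-\abs{\tau}J_1(\abs{\tau})/4$ term; a fourth-cumulant contribution that, after applying standard Bessel recurrences, produces $4\kappa_4 J_3(\abs{\tau})/\abs{\tau}$; and, in the real case ($\beta=1$), a symmetry correction localised near the real axis that gives precisely the integral and Bessel combination on the last line of $e(\tau,\tau^*)$.

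For task (ii) the CLT covariance decomposes into four pieces. The bulk Dirichlet energy is $(4\pi)^{-1}\int_{\disc}\abs{\nabla f_\tau}^2 d^2z=\abs{\tau}^2/4$, since $\abs{\nabla f_\tau}^2\equiv\abs{\tau}^2$. The boundary piece is extracted by Fourier-expanding $f_\tau$ along $\partial\disc$ via the Jacobi--Anger identity; using $t=\abs{\tau}\sin\varphi$ and $s=\abs{\tau}\cos\varphi$ one obtains $f_\tau(e^{i\phi})=\sum_{k\in\mathbf{Z}} J_k(\abs{\tau})e^{ik(\phi+\varphi)}$, and then the boundary covariance kernel of \cite{cipolloni2021fluctuation} reproduces $\sum_k\abs{k}[\beta-1+(2\beta-1)\abs{\sin(\varphi k)}^2]J_k(\abs{\tau})^2$. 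The non-universal fourth-cumulant piece is quadratic in the discrepancy between the bulk mean and the zeroth boundary Fourier coefficient of $f_\tau$, yielding $\kappa_4(2J_1(\abs{\tau})/\abs{\tau}-J_0(\abs{\tau}))^2$. The final term in $v(\tau,\tau^*)$ is the $\beta=1$ real-axis anomaly of the covariance, which after an explicit integration contributes $(t^2-s^2)J_1(2s)/(4s)$.

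The main obstacle is quantitative rather than structural. The CLTs of \cite{cipolloni2019central,cipolloni2021fluctuation} are formulated for essentially fixed test functions, whereas $f_\tau$ depends on $\N$ through $\abs{\tau}$ and its Sobolev norms grow polynomially in $\abs{\tau}$. The restriction $\abs{\tau}\ll \N^{2/7}$ is precisely the regime in which the error terms produced by the multi-resolvent local laws, which scale as $\abs{\tau}^k\N^{-j}$ for suitable $k,j$, remain smaller than both the leading mean $\N e(\tau,\tau^*)$ and the variance $v(\tau,\tau^*)/\N^2\sim\abs{\tau}^2/\N^2$. Attaining $o(1)$ precision in the second moment, rather than only qualitative CLT convergence, therefore requires carefully tracking the $\abs{\tau}$-dependence of every error estimate along the Hermitization argument; the balance between polynomial growth in $\abs{\tau}$ and negative powers of $\N$ is exactly what pins down the $2/7$ threshold.
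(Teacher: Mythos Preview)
Your proposal is correct and follows essentially the same route as the paper: the mean--variance splitting of $\E|L_\N(f_\tau)|^2$, application of the quantitative CLTs from \cite{cipolloni2019central,cipolloni2021fluctuation} to the test function $f_\tau$, and explicit evaluation of the resulting bulk/boundary/$\kappa_4$ integrals via Bessel identities. The only nuance is that the paper does not reopen the Hermitization/local-law machinery to track the $|\tau|$-dependence; instead it invokes the CLT as a black box with error $O(\lVert\Delta f_\tau\rVert_2^2/\N^c)$ for a small fixed $c>0$ (noting in a footnote that this $\lVert\Delta f\rVert_2$-dependence can be read off from the proofs in \cite{cipolloni2019central,cipolloni2021fluctuation}), and then uses $\lVert\Delta f_\tau\rVert_2\lesssim|\tau|^2$---so the $|\tau|\ll\N^{2/7}$ restriction enters simply through this quoted error bound rather than through a reworked local-law analysis.
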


We remark that for $|\tau|\lesssim 1$ the connected component of the DSFF in \eqref{eq:DSFFmaintheonew} depends on $\kappa_4$, i.e. it is sensitive to the fourth moment of the distribution of the entries of $X$. This shows that the DSFF for fairly short times deviates from the Ginibre ensemble (when $\kappa_4=0$) given in \cite[Eq. (4)]{li2021spectral}.

Next, we show that the expression in \eqref{eq:DSFFmaintheonew} substantially simplifies for $|\tau|\gg 1$. In particular, in this regime there is no dependence on the fourth cumulant $\kappa_4$, but there is still a substancial difference between the complex and the real case.

\begin{corollary}\label{thrm:dsff_leading}
Let $X$ be a real or complex i.i.d. matrix satisfying Assumption~\ref{asspt:iid_matrix}. For $s,t \in \mathbf{R}$, and $\tau = t+is$ let $K_{\mathbf{F}}(\tau,\tau^*)=K_{\mathbf{F}}(t,s)$ be defined as in \eqref{eq:DSFF}. Then for $1\ll|\tau|\ll \N^{2/7}$ we have
\begin{equation}
\label{eq:DSFFmaintheo}
    K_{\mathbf{F}}(\tau,\tau^*)=\left[4 \frac{J_1(|\tau|)^2}{|\tau|^2} +\frac{1}{\N^2}\left( \frac{|\tau|^2}{4}+\frac{(t^2-s^2)(2/\beta-1)}{4s}J_1(2s)\right)\right] (1+o(1)), \qquad \mathbf{F}\in \{\mathbf{R},\mathbf{C}\},
\end{equation}
where $\beta$ is a parameter such that $\beta=1$ in the real case and $\beta=2$ in the complex case.
\end{corollary}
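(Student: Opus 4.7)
The plan is to reduce the full asymptotic of Theorem~\ref{thrm:dsff_leading_gen} to its dominant behaviour in the regime $1\ll|\tau|\ll \N^{2/7}$ by quantifying the size of each correction appearing in $e(\tau,\tau^*)^2+v(\tau,\tau^*)/\N^2$. The two analytic inputs are the large-argument Bessel asymptotic $J_k(x)=\sqrt{2/(\pi x)}\cos(x-k\pi/2-\pi/4)+O(x^{-3/2})$ and the Parseval identities
\begin{equation*}
\sum_{k\in\mathbf{Z}} J_k(x)^2 = 1, \qquad \sum_{k\in\mathbf{Z}} k^2 J_k(x)^2 = \frac{x^2}{2},
\end{equation*}
both of which follow by expanding $e^{(x/2)(z-1/z)}=\sum_{k}J_k(x)z^k$ at $z=e^{i\theta}$ and applying Parseval. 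Since the two terms retained in \eqref{eq:DSFFmaintheo} have sizes $|\tau|^{-3}$ and $|\tau|^2/\N^2$ (crossing at $|\tau|\sim\N^{2/5}$), throughout $1\ll|\tau|\ll\N^{2/7}$ the disconnected component dominates, yet the quadratic ramp must be preserved because every absorbed error has to be compared against whichever of the two leading terms is larger at the given $|\tau|$.

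For the first summand of \eqref{eq:DSFFmaintheonew} I would write $e(\tau,\tau^*)=2J_1(|\tau|)/|\tau|+\tilde{e}$, where $\tilde{e}$ collects the three $\N^{-1}$-corrections. Using the Bessel asymptotic, the dominant piece of $\tilde{e}$ is $-|\tau|J_1(|\tau|)/(4\N)=O(|\tau|^{1/2}/\N)$; the $\kappa_4$ term is $O(1/(\N|\tau|^{3/2}))$; and in the real case the bracket containing the integral over $\mathbf{D}$, $J_0(|\tau|)$, $J_0(t)$ and $\cos t$ is bounded uniformly, hence $O(1/\N)$. Squaring,
\begin{equation*}
e(\tau,\tau^*)^2 = 4\frac{J_1(|\tau|)^2}{|\tau|^2} + O\Bigl(\frac{1}{\N|\tau|}\Bigr) + O\Bigl(\frac{|\tau|}{\N^2}\Bigr),
\end{equation*}
and a case analysis (according to whether $|\tau|$ is smaller or larger than $\N^{1/3}$) shows both error terms are $o\bigl(\max(|\tau|^{-3},|\tau|^2/\N^2)\bigr)$ under the hypothesis $|\tau|^2\ll\N$, which is weaker than $|\tau|\ll \N^{2/7}$.

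For $v(\tau,\tau^*)/\N^2$ I would keep the first and fourth summands of \eqref{eq:expvar} and show the other two are absorbed into the $o(1)$. The $\kappa_4$ bracket $(2J_1(|\tau|)/|\tau|-J_0(|\tau|))^2$ is $O(1/|\tau|)$ by Bessel asymptotics, contributing $O(1/(|\tau|\N^2))$. For the series, the factor $\beta-1+(2\beta-1)|\sin(\varphi k)|^2$ is bounded by $3\beta-2$, after which Cauchy--Schwarz combined with the Parseval identities above yields
\begin{equation*}
\sum_{k\in\mathbf{Z}} |k|\,J_k(|\tau|)^2 \;\le\; \Bigl(\sum_{k} k^2 J_k(|\tau|)^2\Bigr)^{1/2}\Bigl(\sum_{k} J_k(|\tau|)^2\Bigr)^{1/2} \;=\; \frac{|\tau|}{\sqrt{2}},
\end{equation*}
so the series contributes at most $O(|\tau|/\N^2)=o(|\tau|^2/\N^2)$ once $|\tau|\gg 1$. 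Combining the two simplifications with the multiplicative error from Theorem~\ref{thrm:dsff_leading_gen} delivers \eqref{eq:DSFFmaintheo}. The only non-bookkeeping step is the Cauchy--Schwarz bound displayed above; there is no deeper obstacle, since the analytic heavy lifting already sits inside Theorem~\ref{thrm:dsff_leading_gen} and the corollary merely records the large-$|\tau|$ collapse of its formula.
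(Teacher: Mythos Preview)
Your approach is essentially the paper's: the Cauchy--Schwarz bound $\sum_{k}|k|\,J_k(x)^2\le x/\sqrt{2}$ obtained from the two Parseval identities is exactly Lemma~\ref{lemma:series_control} (the paper derives $\sum_k k^2 J_k(x)^2=x^2/2$ by differentiating the Neumann addition formula in Fact~\ref{fact:bessel_properties}(iv) rather than from the generating function, but the identity and its use are the same), and your reduction of $e(\tau,\tau^*)^2$ via the large-argument Bessel asymptotic mirrors the paper's one-line argument.

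One point deserves care in the real case. Your assertion that the bracket containing $\tfrac{1}{4\pi}\int_{\mathbf D}e^{itx}(1-e^{isy})\,y^{-2}\,dxdy$ is ``bounded uniformly'' is not justified as stated: at $t=0$ one symmetrises to $(1-\cos(sy))/y^2$ and a direct computation gives a contribution of order $|s|$, so this bracket can grow like $|\tau|$ rather than $O(1)$. The paper's proof is equally laconic here (it simply asserts that squaring $e$ and using Fact~\ref{fact:bessel_properties}(iii) ``readily'' yields $e^2\sim 4J_1(|\tau|)^2/|\tau|^2$), so your argument is at the same level of detail as the original; but the specific $O(1)$ claim you make is stronger than what you have shown, and you should either supply a bound of the form $O(|\tau|)$ and check that the resulting errors $O(|\tau|^{-1/2}/\N)$ and $O(|\tau|^2/\N^2)$ are still absorbed in the regime $|\tau|\ll \N^{2/7}$, or else refrain from asserting uniform boundedness.
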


Note that in the regime $1\ll |\tau|\ll \sqrt{\N}$ the DSFF $K_{\mathbf{F}}(\tau,\tau^*)$ consists of the sum of two different terms. The first term (which dominates for $|\tau|\ll N^{2/5}$) comes from the the global density of the eigenvalues, which is model dependent, whilst the second term (which dominates for $N^{2/5}\ll |\tau|\ll \sqrt{N}$) is expected to be universal, i.e. depends only on the symmetry class of the matrix $X$. Furthermore, note that $K_{\mathbf{C}}(\tau,\tau^*)$ is rotationally symmetric, whilst $K_{\mathbf{R}}(\tau,\tau^*)$ is not, as a consequence of the symmetry of the spectrum with respect to the real axis. In particular, we note that in the time regime $1\ll|\tau|\ll\sqrt{\N}$, the expression for the DSFF obtained in \eqref{eq:DSFFmaintheo} agrees with \cite[Eq. (4)]{li2021spectral} in the complex case; this follows by considering the Taylor expansion of $x\mapsto e^{-x}$ around $x = 0$, which yields 
\begin{align*}
    K_{\text{GinUE}}(\tau, \tau^*) =\N+4\N^2\frac{J_1(|\tau|)^2}{|\tau|^2}-\N\exp\round{-\frac{|\tau|^2}{4\N}} = 4\N^2\frac{J_1(|\tau|)^2}{|\tau|^2} + \frac{|\tau|^2}{4}+o\round{\frac{|\tau|^2}{4}}. 
\end{align*}
Additionally, in the real case, for $N^{2/5}\ll |\tau|\ll \sqrt{N}$ (i.e. we only consider the connected component) we have $K_{\mathbf{R}}(\tau,\tau^*)=2K_{\mathbf{C}}(\tau,\tau^*)$ for $\theta=0$, which follows from \eqref{eq:DSFFmaintheo} by $J_1(2s)/s\to 1$ as $s\to 0$, and $K_{\mathbf{R}}(\tau,\tau^*)=K_{\mathbf{C}}(\tau,\tau^*)$ for $\theta\in (0,\pi/2]$, confirming numerical predictions from \cite[Appendix A]{li2021spectral}. We point out that in \cite[Appendix A]{li2021spectral} the authors notice a different behavior of $K_{\mathbf{R}}(\tau,\tau^*)$ when $\theta=\pi/2$ as well, however this phenomenon is not visible on the time scales $|\tau|\ll \sqrt{N}$ we consider here, since this different behaviour is caused by the degeneracy of the spectrum due to the $\sim\sqrt{N}$ real eigenvalues, which would be visible only at scales $|\tau|\gtrsim \sqrt{N}$. On the other hand, we are able to detect the transition of $K_{\mathbf{R}}(\tau,\tau^*)$ as $\theta\to 0$ since this effect is caused by the $2$--fold degeneracy of $\sim N$ eigenvalues (i.e. each complex eigenvalue is counted twice).

\begin{remark}
In Theorem~\ref{thrm:dsff_leading_gen} (and in Corollary~\ref{thrm:dsff_leading}) we computed only the expectation of the DSFF; however, our method, relying on \cite{cipolloni2021fluctuation, cipolloni2019central}, also allows to compute higher moments
\[
\E \left|\frac{1}{\N^2}\sum_{i,j = 1}^{\N} e^{it \round{x_i - x_j}+is\round{y_i - y_j}}\right|^k \qquad \mathrm{for} \qquad k\in\mathbf{N}.
\]
We refrain from doing this here as it is out of the scope of the current paper.
\end{remark}

We conclude this section pointing out that the methods used in the current paper allow us to rigorously prove \eqref{eq:DSFFmaintheo} only up to the intermediate time $|\tau|\ll \N^{2/7}$; however, we expect \eqref{eq:DSFFmaintheo} and our proof method in Section~\ref{sec:CLT}, to be correct up to the Heisenberg time, i.e. up to $|\tau|\ll \tau_{\mathrm{Hei}}\sim\sqrt{\N}$ (this is also confirmed numerically in Figure~\ref{fig:DSFF_numerics}).


\vspace{1em}

\section{Asymptotic of the DSFF: Proof of Theorem~\ref{thrm:dsff_leading}}
\label{sec:CLT}

 
We start by rewriting the DSFF as a linear statistic of the eigenvalues of $X$ for a specific test function. For this purpose we introduce the function
\begin{equation}
    \label{eq:defftau}
    f_{\tau,\tau^*}(z) := e^{it \Re z + is \Im z} = e^{it\frac{z+\overline{z}}{2}+s\frac{z-\overline{z}}{2}} = e^{i\tau\frac{z^*}{2}+i\tau^*\frac{z}{2}}
\end{equation}
for $\tau = s+it$, and $s,t \in \mathbf{R}$. We can thus write the averaged DSFF as follows
\begin{align}\label{eq:Expect_Kst}
     \N^2 \cdot K_{\mathbf{F}}(\tau, \tau^*) = \E\left[ \left | \sum_{i = 1}^{\N} f_{\tau, \tau^*}(\sigma_i) \right |^2\right] = \left |\E \left[\sum_{i = 1}^{\N} f_{\tau, \tau^*}(\sigma_i)\right]\right |^2 + \mathbf{Var}\left[\sum_{i = 1}^{\N} f_{\tau, \tau^*}(\sigma_i)\right], 
\end{align}
with $\E$ and $\mathbf{Var}$ denoting the expectation and the variance with respect to the random matrix $X$. From now on, for concreteness, we focus on the proof in the complex case. The computations in the real case are similar and rely on \cite[Theorem 2.2]{cipolloni2021fluctuation} rather than \cite[Theorem 2.2]{cipolloni2019central}; we present the main difference in Section~\ref{sec:realcase} below.

Define
\[
L_{\N}(f):=\sum_{i = 1}^{\N} f(\sigma_i) - \E \sum_{i = 1}^{\N} f(\sigma_i)\Rightarrow L(f).
\]
Then, by \cite[Theorem 2.2]{cipolloni2019central} (together with \cite[Corolalry 2.4]{cipolloni2019central} for the effective convergence of moments\footnote{We remark that in \cite[Corolalry 2.4]{cipolloni2019central} the dependence on $\lVert \Delta f \rVert_2$ is not explicitly written, but it can be deduced by inspection of the proof.}), for any sufficiently smooth test function $f$ we have\footnote{Here $\lVert \cdot \rVert_2$ denotes the usual $L^2$--norm. Furthermore, for $h$ defined on the boundary of the unit disk $\partial \mathbf{D}$, we define its Fourier transform by
\[
    \widehat{h}(k):= \frac{1}{2\pi}\int_0^{2\pi} h(e^{i\theta}) e^{-i\theta k}d\theta, \qquad\quad k \in \mathbf{Z}.
\]} (recall that $\kappa_4$ denotes the fourth cumulant of the entries)
\begin{equation}
\begin{split}
\label{eq:usefulrel}
\E L_{\N}(f)&=\frac{\N}{\pi}\int_{\mathbf{D}} f(z) d^2z +\frac{1}{8\pi} \int_{\mathbf{D} }\Delta f(z)\, d^2z  - \frac{\kappa_4}{\pi} \int_{\mathbf{D}} f(z)\round{2|z|^2-1}d^2 z  +O\left(\frac{\lVert \Delta  f\rVert_2}{\N^c}\right)\\
\E|L_{\N}(f)|^2&=\frac{1}{4\pi}\int_{\mathbf{D}} \left|\nabla  f\right|^2\, d^2z+\frac{1}{2}\sum_{k\in \mathbf{Z}} |k|\left|\widehat{ f|_{\partial\mathbf{D}}}(k)\right|^2\\
& \quad +\kappa_4 \left|\frac{1}{\pi}\int_{\mathbf{D}}f(z)d^2 z - \frac{1}{2\pi} \int_0^{2\pi} f(e^{i\theta}) d\theta\right|^2+O\left(\frac{\lVert \Delta f\rVert_2^2}{\N^c}\right),
\end{split}
\end{equation}
for some small fixed $c>0$.

For $\tau$ as prescribed above, $f_{\tau, \tau^*}$ defined in \eqref{eq:defftau} satisfies the assumptions of the above CLT, so in particular it satisfies \eqref{eq:usefulrel} with $\lVert \Delta f_{\tau,\tau^*}\rVert_2\lesssim |\tau|^2$. We now compute the explicit terms in the rhs. of \eqref{eq:usefulrel} when $f=f_{\tau,\tau^*}$. For this purpose, we recall the definition of Bessel functions of the first kind.
\begin{definition}\label{def:bessel_int}
For $n\in\mathbf{Z}, z\in \mathbf{C}$, we define the $n$-th Bessel function of the first kind by
\[J_n(z):= \frac{1}{2\pi} \int_{-\pi}^{\pi} e^{i\round{n\tau - z\sin \tau }}d\tau.  \]
\end{definition}

Next, we will use several important properties of Bessel functions (see \cite[Section 9]{abramowitz1988handbook}), which we gather below for the reader's convenience.
\begin{fact}
\label{fact:bessel_properties} For $k,l \in \mathbf{N}$, $n\in \mathbf{Z}$, and $z, \omega, u \in \mathbf{C}$, we have
\begin{enumerate}
    \item[(i)] $J_n$ may equivalently be defined by the following series 
\[J_n(z):=\sum_{m = 0}^{\infty}\frac{(-1)^m}{m!\Gamma(m+n+1)}\round{\frac{z}{2}}^{2m + n};\]
\item[(ii)] $J_{-n}(z) = (-1)^n J_n(z)$;
\item[(iii)] For $|z| \gg |k^2-\frac14|$, $J_k(z) = \sqrt{\frac{2}{\pi z}}\cos\round{z-\frac{k\pi}{2}-\frac{\pi}{4}}$(1+o(1));
\item[(iv)] $\sum_{k \in \mathbf{Z}} J_k(\omega)J_k(u) e^{-ik\theta} = J_0\round{\sqrt{\omega^2+u^2-2\omega u \cos \theta}}$, \quad for\,\, $\theta \in (-\pi, \pi]$;
\item[(v)] $\round{\frac1z \frac{d}{dz}}^l \round{z^k J_k(z)} = z^{k-l}J_{k-l}(z)$.
\end{enumerate}
\end{fact}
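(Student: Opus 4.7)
The plan is to derive all five identities starting from the two standard representations of $J_n$, namely the oscillatory integral in Definition~\ref{def:bessel_int} and the power series in (i), with the generating function identity
\[
\exp\round{\frac{z}{2}\round{t - \frac{1}{t}}} = \sum_{n \in \mathbf{Z}} J_n(z)\, t^n
\]
as a bridge. Setting $t = e^{i\tau}$ yields the Jacobi--Anger expansion $e^{iz\sin\tau} = \sum_n J_n(z) e^{in\tau}$, which by Fourier inversion recovers Definition~\ref{def:bessel_int}. For (i), I would expand $e^{-iz\sin\tau}$ inside the integral as a double series via $\sin\tau = (e^{i\tau}-e^{-i\tau})/(2i)$ and the binomial theorem, then extract the $e^{-in\tau}$ coefficient using orthogonality of characters on $[-\pi,\pi]$; reindexing yields the alternating series with the $1/[m!\Gamma(m+n+1)]$ coefficients. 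For (ii), the integral form gives the identity by the substitution $\tau \mapsto -\tau$ combined with the oddness of $\sin$; alternatively, the series form yields it directly, since $1/\Gamma(m-n+1)$ vanishes for $0 \le m < n$, so shifting $m \mapsto m+n$ produces the factor $(-1)^n$.

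For (iii), I would apply stationary phase to $J_k(z) = \frac{1}{2\pi}\int_{-\pi}^{\pi} e^{i\Phi(\tau)}\,d\tau$ with phase $\Phi(\tau) = k\tau - z\sin\tau$. The stationary points satisfy $\cos\tau = k/z$, which for $|z|\gg |k^2 - \tfrac14|$ lie near $\tau = \pm\pi/2$; a quadratic Taylor expansion at each and the standard Gaussian integral evaluation produce the contributions $\frac{1}{\sqrt{2\pi z}}\, e^{\pm i(z-k\pi/2-\pi/4)}$, summing to $\sqrt{2/(\pi z)}\cos(z - k\pi/2 - \pi/4)$. For (iv), I would substitute the integral representation for one $J_k$ factor and interchange sum and integral:
\[
\sum_{k\in\mathbf{Z}} J_k(\omega) J_k(u) e^{-ik\theta} = \frac{1}{2\pi}\int_{-\pi}^{\pi} e^{-iu\sin\tau}\sum_{k\in\mathbf{Z}} J_k(\omega) e^{ik(\tau-\theta)}\,d\tau.
\]
Applying Jacobi--Anger to the inner sum collapses it to $e^{i\omega\sin(\tau-\theta)}$; expanding $\sin(\tau-\theta)$ and combining with $-u\sin\tau$ rewrites the total phase as $-R\sin(\tau - \alpha)$ for an appropriate $\alpha$ and $R = \sqrt{\omega^2 + u^2 - 2\omega u \cos\theta}$, so by $2\pi$--periodicity the integral equals $J_0(R)$.

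Finally, for (v), I would use the series expansion to verify the base case $\frac{1}{z}\frac{d}{dz}\bigl(z^k J_k(z)\bigr) = z^{k-1}J_{k-1}(z)$ by term-by-term differentiation, using $\Gamma(m+k+1) = (m+k)\Gamma(m+k)$ to telescope the index; iterating $l$ times gives the general identity. The main obstacle here is (iii): the stationary-phase analysis requires tracking both the constant $\sqrt{2/(\pi z)}$ and the uniformity of the error in $|k^2-\tfrac14|/|z|$ (the $-\tfrac14$ arising from the subleading correction at the stationary points), whereas (i)--(ii), (iv), and (v) reduce to relatively direct manipulations of the integral representation, the series, and the generating function.
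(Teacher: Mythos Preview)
Your proposal is essentially correct, but it is worth noting that the paper does not prove this statement at all: Fact~\ref{fact:bessel_properties} is stated as a collection of classical identities and is simply referenced to \cite[Section~9]{abramowitz1988handbook}. There is therefore no ``paper's own proof'' to compare against; your plan supplies what the paper deliberately omits.

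That said, your outline is the standard route and all the pieces check out. The generating-function/Jacobi--Anger bridge between Definition~\ref{def:bessel_int} and the series in (i) is the natural way to pass between the two representations, and your derivations of (ii), (iv), and (v) are direct computations that go through as written. For (iv) in particular, after collapsing the inner sum via Jacobi--Anger one indeed obtains a phase $(\omega\cos\theta - u)\sin\tau - \omega\sin\theta\cos\tau$, whose amplitude is exactly $\sqrt{\omega^2+u^2-2\omega u\cos\theta}$, and periodicity together with the evenness of $J_0$ finishes it. For (iii), stationary phase is again the canonical argument; just be aware that to make the $o(1)$ genuinely uniform in the regime $|z|\gg|k^2-\tfrac14|$ one typically appeals to the more refined Debye/uniform asymptotic expansion rather than a bare quadratic stationary-phase estimate, since the stationary points $\tau=\pm\arccos(k/z)$ drift with $k/z$ and the naive error bound depends on higher derivatives of the phase. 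This is the only place where your proposal would require some additional care; the remark about the $-\tfrac14$ coming from the subleading correction is correct in spirit but would need to be made precise if you actually want the stated uniformity.
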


By relying on the above, we proceed to derive expressions for the desired quantities in (\ref{eq:Expect_Kst}); the proof of this lemma is postopned at the end of this section.
\begin{lemma}\label{lemma:mean_var} For $s,t \in \mathbf{R}$, and $\tau = t+is$, there exists $c>0$ such that
\begin{align}\label{eq:mean_var_expressions}
    \begin{split}
    \E \left[\sum_{i = 1}^{\N} f_{\tau,\tau^*}(\sigma_i) \right]& = 2\N\frac{ J_1(|\tau|)}{|\tau|}- \frac{ |\tau|J_1(|\tau|)}{4}+ 4\kappa_4 \frac{J_3\round{|\tau|}}{|\tau|} + O\left(\frac{|\tau|^2}{\N^c}\right)\\
    \mathbf{Var}\left[\sum_{i = 1}^{\N} f_{\tau,\tau^*}(\sigma_i)\right] & = \frac{|\tau|^2}{4}+ \frac12\sum_{k \in \mathbf{Z}}|k| J_k\round{|\tau|}^2 + \kappa_4\round{ 2\frac{J_1(|\tau|)}{|\tau|} - J_0\round{|\tau|}}^2+ O\left(\frac{|\tau|^4}{\N^c}\right).
    \end{split}
\end{align}
\end{lemma}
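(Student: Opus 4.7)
The plan is to apply the linear-statistics CLT \eqref{eq:usefulrel} with test function $f=f_{\tau,\tau^*}$ and reduce every deterministic integral appearing on the right-hand side to a Bessel function expression. The key reduction is that $f_{\tau,\tau^*}(z)=e^{i(tx+sy)}$ depends on $z=re^{i\theta}$ only through the rotated combination $|\tau|r\cos(\theta-\alpha)$, where $\alpha=\arg(t+is)$, because $tx+sy=\Re(\bar\tau z)$. This lets every $\theta$-integral collapse via the Jacobi--Anger identities
\[
\int_0^{2\pi}e^{i\omega\cos\phi}\,d\phi=2\pi J_0(\omega),\qquad e^{i\omega\cos\phi}=\sum_{k\in\mathbf{Z}} i^k J_k(\omega)e^{ik\phi},
\]
while every radial integral collapses via $\int_0^1 r^{n+1} J_n(|\tau|r)\,dr = J_{n+1}(|\tau|)/|\tau|$ (Fact~\ref{fact:bessel_properties}(v)). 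I would first verify the hypotheses of the CLT by noting that $\Delta f_{\tau,\tau^*}=-|\tau|^2 f_{\tau,\tau^*}$, so $\lVert \Delta f_{\tau,\tau^*}\rVert_2\lesssim |\tau|^2$; this directly produces the $O(|\tau|^2/\N^c)$ and $O(|\tau|^4/\N^c)$ remainders in \eqref{eq:mean_var_expressions}.

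For the mean, I would evaluate the three deterministic integrals in \eqref{eq:usefulrel} in order. The leading piece becomes $\tfrac{\N}{\pi}\int_{\mathbf{D}} f\,d^2z=2\N\int_0^1 r J_0(|\tau|r)\,dr=2\N J_1(|\tau|)/|\tau|$. The Laplacian correction is proportional to $f$ itself, so $\tfrac{1}{8\pi}\int_{\mathbf{D}}\Delta f\,d^2z=-\tfrac{|\tau|^2}{8\pi}\cdot\tfrac{2\pi J_1(|\tau|)}{|\tau|}=-\tfrac{|\tau| J_1(|\tau|)}{4}$. For the cumulant contribution, I would split $\int_{\mathbf{D}}f(2|z|^2-1)\,d^2z$ and compute $4\pi\int_0^1 r^3 J_0(|\tau|r)\,dr$ by integrating by parts twice using $\tfrac{d}{du}(u J_1(u))=u J_0(u)$ and $\tfrac{d}{du}(u^2 J_2(u))=u^2 J_1(u)$, giving $J_1(|\tau|)/|\tau|-2J_2(|\tau|)/|\tau|^2$. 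Subtracting the $\int f$ piece and then applying the recursion $J_{n-1}(z)+J_{n+1}(z)=2nJ_n(z)/z$ with $n=2$ collapses the residual $J_1,J_2$ combination into a single multiple of $J_3(|\tau|)/|\tau|$.

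For the variance all three terms are even more direct. The gradient contribution is trivial since $|\nabla f_{\tau,\tau^*}|^2=t^2+s^2=|\tau|^2$ is constant, so $\tfrac{1}{4\pi}\int_{\mathbf{D}}|\nabla f|^2\,d^2z=|\tau|^2/4$. The boundary Fourier term uses $f(e^{i\theta})=e^{i|\tau|\cos(\theta-\alpha)}$, whose Jacobi--Anger expansion yields the coefficients $\widehat{f|_{\partial\mathbf{D}}}(k)=i^k J_k(|\tau|) e^{-ik\alpha}$, so $|\widehat{f|_{\partial\mathbf{D}}}(k)|^2=J_k(|\tau|)^2$ and the sum in \eqref{eq:usefulrel} becomes $\tfrac12\sum_k |k| J_k(|\tau|)^2$. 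The fourth-cumulant variance term is then obtained by direct substitution of the two already-computed quantities $\tfrac{1}{\pi}\int_{\mathbf{D}}f\,d^2z=2J_1(|\tau|)/|\tau|$ and $\tfrac{1}{2\pi}\int_0^{2\pi}f(e^{i\theta})\,d\theta=J_0(|\tau|)$.

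I expect the only mildly delicate step to be the simplification of the fourth-cumulant correction to the mean into the compact $J_3$ form: it is not purely algebraic in $J_0,J_1$ but requires combining the two integration-by-parts outputs with the three-term recursion and checking the resulting cancellations. Everything else reduces to standard polar-coordinate computations and the Bessel identities recorded in Fact~\ref{fact:bessel_properties}, and the stated error terms follow immediately from the $L^2$-bound on $\Delta f_{\tau,\tau^*}$.
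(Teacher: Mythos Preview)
Your proposal is correct and follows essentially the same route as the paper: apply the CLT~\eqref{eq:usefulrel} to $f_{\tau,\tau^*}$, pass to polar coordinates, and reduce every angular integral to a Bessel expression using the rotation $t\cos\theta+s\sin\theta=|\tau|\cos(\theta-\alpha)$ (the paper writes the equivalent $|\tau|\sin(\varphi+\theta)$). The only cosmetic differences are that the paper evaluates the radial integrals and the $\kappa_4$ integral $\int_{\mathbf D} f(2|z|^2-1)\,d^2z$ by expanding $J_0$ as a power series and resumming to $J_1$ and $J_3$, whereas you use the closed-form identity $\int_0^1 r^{n+1}J_n(|\tau|r)\,dr=J_{n+1}(|\tau|)/|\tau|$ together with integration by parts and the three-term recursion; and the paper computes $\widehat{f|_{\partial\mathbf D}}(k)$ by the change of variable $\theta\mapsto\theta+\varphi$ and the integral definition of $J_k$, whereas you read off the Fourier coefficients directly from the Jacobi--Anger expansion. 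Both paths are standard and equivalent; your integration-by-parts route to $J_3$ is arguably tidier than the series reshuffling in~\eqref{eq:second_int_expectation}. The error terms and the variance computation match the paper line by line.
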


We are now ready to conclude Theorem~\ref{thrm:dsff_leading_gen}.


\begin{proof}[Proof of Theorem~\ref{thrm:dsff_leading_gen}]
The asymptotic in \eqref{eq:DSFFmaintheonew} immediately follows from Lemma~\ref{lemma:mean_var}.
\end{proof}

Then, in order to conclude the asymptotic in Corollary~\ref{thrm:dsff_leading} from Theorem~\ref{thrm:dsff_leading_gen}, i.e. to identify the leading term of \eqref{eq:DSFFmaintheonew} for $|\tau|\gg 1$, we will use the following additional technical lemma, whose proof is presented at the end of this section.


\begin{lemma}\label{lemma:series_control}
    For $x > 0$ we have $\sum_{k \in \mathbf{Z}}|k| J_k\round{x}^2 \lesssim x$. 
\end{lemma}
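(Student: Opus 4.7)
The plan is to reinterpret the sum via the Jacobi--Anger expansion, which follows directly from Definition~\ref{def:bessel_int}: after the substitution $\tau \mapsto -\tau$ inside the integral representation of $J_k(x)$, one recognizes $J_k(x)$ as the $k$-th Fourier coefficient of the $2\pi$-periodic function $\theta \mapsto e^{ix \sin \theta}$, so that
\[
    e^{ix \sin \theta} = \sum_{k \in \mathbf{Z}} J_k(x)\, e^{i k \theta}.
\]
Two applications of Parseval's identity together with the Cauchy--Schwarz inequality will then deliver the claimed bound, with explicit constant $1/\sqrt{2}$.

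First, I would apply Parseval directly to the above expansion. Since $\left| e^{ix \sin \theta}\right| = 1$, this yields the classical identity $\sum_{k \in \mathbf{Z}} J_k(x)^2 = 1$. Next, I would differentiate the generating function in $\theta$, obtaining $\partial_\theta e^{ix \sin \theta} = i x \cos \theta \cdot e^{ix \sin \theta}$, whose Fourier coefficients are precisely $i k\, J_k(x)$. Applying Parseval a second time gives
\[
    \sum_{k \in \mathbf{Z}} k^2 J_k(x)^2 = \frac{x^2}{2\pi}\int_{-\pi}^{\pi} \cos^2\theta\, d\theta = \frac{x^2}{2}.
\]

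Finally, the Cauchy--Schwarz inequality (applied to $|k| J_k(x)^2 = |k J_k(x)| \cdot |J_k(x)|$) combines the two identities to give
\[
    \sum_{k \in \mathbf{Z}} |k| J_k(x)^2 \;\le\; \round{\sum_{k \in \mathbf{Z}} k^2 J_k(x)^2}^{1/2} \round{\sum_{k \in \mathbf{Z}} J_k(x)^2}^{1/2} = \frac{x}{\sqrt{2}},
\]
which is exactly the stated bound. There is no genuine obstacle here, since every step reduces to standard $L^2$-Fourier analysis on the unit circle; the only care needed is to match the Fourier-coefficient convention to Definition~\ref{def:bessel_int}, which is straightforward after the aforementioned change of variables.
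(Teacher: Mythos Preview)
Your proof is correct and follows essentially the same strategy as the paper's: both arguments establish the two identities $\sum_{k\in\mathbf{Z}} J_k(x)^2 = 1$ and $\sum_{k\in\mathbf{Z}} k^2 J_k(x)^2 = x^2/2$ and then combine them via Cauchy--Schwarz. The only difference is in how the second identity is derived: the paper obtains it by differentiating the addition formula (Fact~\ref{fact:bessel_properties}(iv)) twice in $\theta$ and letting $\theta\to 0$, whereas you apply Parseval directly to $\partial_\theta e^{ix\sin\theta}$, which is a bit more elementary and avoids the addition theorem altogether.
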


\begin{proof}[Proof of Corollary~\ref{thrm:dsff_leading}]
 Upon squaring the result obtained for the expectation in (\ref{eq:mean_var_expressions}) and using (iii) of Fact \ref{fact:bessel_properties}, we readily obtain that $\left |\E \sum_{i} f_{\tau,\tau^*}(\sigma_i)\right |^2 \sim 4 \N^2J_1(|\tau|)^2/|\tau|^2$. In addition, by Lemma~\ref{lemma:series_control} we also have that for $|\tau| \gg 1$, the leading terms in the expression for the variance in (\ref{eq:mean_var_expressions}) is $|\tau|^2/4$. This yields the desired result, upon recalling (\ref{eq:Expect_Kst}).
\end{proof}

We now conclude this section with the proof of Lemmas~\ref{lemma:mean_var}--\ref{lemma:series_control}.

\begin{proof}[Proof of Lemma~\ref{lemma:mean_var}]
We start with the computation of the expectation in the first line of \eqref{eq:usefulrel}. Recall the definition of $f_{\tau,
\tau^*}(z)$ from \eqref{eq:defftau}, then using the parametrizations $z=r(\cos \theta+i\sin \theta)$, we obtain
\begin{equation}
\label{eq:polarcor}
    \int_{\mathbf{D}}f_{\tau,\tau^*}(z)d^2z = \int_{x^2 + y^2 < 1} e^{itx + isy}dx dy  = \int_0^1 r \int_{-\pi}^\pi  e^{i\round{t r \cos \theta + s r\sin \theta}}\, d\theta dr.
\end{equation}

By standard double-angle identities, we note that we may write
\begin{equation}\label{eq:double_angle}
        t \cos \theta + s \sin \theta = \sqrt{t^2 + s^2}\sin\round{\varphi + \theta} 
\end{equation}
for a unique $\varphi \in (-\pi, \pi]$, s.t. $\sin\varphi = t/\sqrt{t^2+s^2},\: \cos\varphi = s/\sqrt{t^2+s^2}$. This implies that the inner integral in \eqref{eq:polarcor} can be written as
\begin{equation}
\label{eq:compint}
\begin{split}
    \int_{-\pi}^\pi e^{i\round{tr \cos \theta + sr \sin \theta}} d\theta & = \int_{-\pi}^\pi e^{ir \sqrt{t^2 + s^2}\sin\round{\varphi + \theta}}d\theta  \\
      & = \int_{-\pi}^{\pi} e^{-ir \sqrt{t^2 + s^2}\sin\round{\theta}}d\theta  \\
      & = 2\pi J_0\round{r\sqrt{t^2+s^2}},
      \end{split}
\end{equation}
where in the second equality we used that, by periodicity, the integral over $(-\pi+\varphi,\pi+\varphi]$ is equal to the one over $(-\pi,\pi]$. Plugging \eqref{eq:compint} into \eqref{eq:polarcor}, and using the series expansion from (i) of Fact~\ref{fact:bessel_properties}, we obtain 
\begin{equation}
    \begin{split}\label{eq:first_int_expectation}
     \int_{\mathbf{D}}f_{\tau,\tau^*}(z)d^2z & = 2\pi\int_0^1 r  J_0\round{r\sqrt{t^2+s^2}} dr \\
      &\quad  = 2\pi\sum_{m = 0}^{\infty}\frac{(-1)^m}{m!\Gamma(m+1)}\round{\frac{\sqrt{t^2+s^2}}{2}}^{2m}\int_0^1 r^{2m+1} dr \\
     & \quad = \frac{2\pi}{\sqrt{t^2+s^2}}\sum_{m = 0}^{\infty}\frac{(-1)^m}{m!\Gamma(m+2)}\round{\frac{\sqrt{t^2+s^2}}{2}}^{2m+1}\\
    & \quad =  2\pi\frac{J_1(|\tau|)}{|\tau|}.
    \end{split}
\end{equation}
By a similar argument, we compute the third integral in the first line of \eqref{eq:usefulrel}
\begin{align*}
    \int_{\disc} f_{\tau,\tau^*}(z)\round{2|z|^2-1}d^2z & =  \int_0^1 \int_{ -\pi}^\pi e^{itr\cos \theta + isr\sin \theta}\round{2r^3-r}d\theta dr \\
    & = \int_0^1 \round{2r^3-r} \int_{-\pi}^\pi e^{i\round{tr \cos \theta + sr \sin \theta}} d\theta dr.
\end{align*}
Using again \eqref{eq:polarcor} to compute the $\theta$--integral, we obtain
\begin{align}\label{eq:second_int_expectation}
    \begin{split}
     \int_{\disc} f_{\tau,\tau^*}(z)\round{2|z|^2-1}d^2z & = 2\pi \int_0^1 \round{2r^3-r} J_0\round{r\sqrt{t^2+s^2}}dr \\
     & \quad = -2\pi\round{\frac{\sqrt{t^2+s^2}}{2}}^2 \sum_{m=1}^{\infty}\frac{(-1)^{m-1}}{(m-1)!\Gamma(m+3)}\round{\frac{\sqrt{t^2+s^2}}{2}}^{2(m-1)} \\
     & \quad = -2\pi\round{\frac{\sqrt{t^2+s^2}}{2}}^{-1} \sum_{m=0}^{\infty}\frac{(-1)^m}{m!\Gamma(m+4)}\round{\frac{\sqrt{t^2+s^2}}{2}}^{2m+3} \\
     & \quad = -4\pi\frac{ J_3\round{|\tau|}}{|\tau|}.
    \end{split}
\end{align}

Combining \eqref{eq:first_int_expectation} and \eqref{eq:second_int_expectation}, and using that $\Delta f_{\tau,\tau^*}=-|\tau|^2f_{\tau,\tau^*}$, yield the desired result for the expectation term in \eqref{eq:mean_var_expressions}.


Next, we consider the terms in the variance in \eqref{eq:usefulrel}. We start from the term consisting of the square of the $L^2(\mathbf{D})$--norm of $\nabla f$, which is given by
\begin{equation}
\label{eq:gradient}
   \frac{1}{4\pi}\left\|\nabla f_{\tau,\tau^*}\right\|_{L^2(\mathbf{D})}^2 =  \frac{|\tau|^2}{4}.
\end{equation}
For the second term in \eqref{eq:usefulrel}, choosing $\varphi$ as in \eqref{eq:double_angle}, we have
    \begin{equation}
    \label{eq:comph12}
    \begin{split}
        \hat{f}(k)  = \frac{1}{2\pi}\int_{0}^{2\pi} e^{it\cos \theta + is \sin \theta - i\theta k}d\theta &= \frac{e^{i\varphi k}}{2\pi}\int_{\varphi}^{\varphi + 2\pi}e^{i\round{-k \alpha + \sqrt{t^2+s^2}\sin\alpha}}d\alpha\\
        & = e^{i\varphi k} J_{-k}\round{-\sqrt{t^2+s^2}} \\
        &= e^{i\varphi k}(-1)^{k+1} J_{k}\round{|\tau|},
        \end{split}
    \end{equation}
where in the last step we used (ii) of Fact~\ref{fact:bessel_properties} and the expansion of $J_k$ in (i) of Fact~\ref{fact:bessel_properties}. We thus obtain
    \begin{align}\label{eq:holder_norm}
        \frac{1}{2}\sum_{k\in \mathbf{Z}} |k|\left|\widehat{(f_{\tau,\tau^*})|_{\partial\mathbf{D}}}(k)\right|^2 =\frac{1}{2}  \sum_{k \in \mathbf{Z}}|k| J_k\round{|\tau|}^2.
    \end{align}

Using computations analogous to the ones used to obtain \eqref{eq:first_int_expectation}--\eqref{eq:second_int_expectation}, we get
\begin{equation}
\label{eq:kappa4}
\begin{split}
    \frac{1}{\pi}\int_{\mathbf{D}} f_{\tau, \tau^*}(z) d^2 z & = 2\frac{J_1\round{|\tau|}}{|\tau|}\\
 \frac{1}{2\pi} \int_0^{2\pi} f_{\tau, \tau^*}(e^{i\theta}) d\theta & = \sum_{n\in \mathbf{Z}}(-i)^n J_n(t)J_n(s) = J_0(\sqrt{t^2+s^2})
    \end{split}
\end{equation}
where the last equality follows from (iv) of Fact~\ref{fact:bessel_properties}, upon choosing $\theta = \pi/2$.
Finally, combining \eqref{eq:gradient}--\eqref{eq:kappa4}, we obtain the desired expression for the variance in \eqref{eq:mean_var_expressions}.
\end{proof}

\begin{proof}[Proof of Lemma~\ref{lemma:series_control}]
    By the Cauchy-Schwarz inequality, together with the fact that $\sum_{k \in \mathbf{Z}}J_k(x)^2 = J_0(0) = 1$ (see e.g. (v) of Fact~\ref{fact:bessel_properties} for $\theta= 0$ and $\omega=u$), we obtain
    \begin{align}\label{eq:cs_series}
        \sum_{k\in \mathbf{Z}}|k|J_k(x)^2 \leq \sqrt{\sum_{k\in {\mathbf{Z}}}k^2 J_k(x)^2}.
    \end{align}
    Next, by differentiating both sides of the expression in (v) of Fact~\ref{fact:bessel_properties} with respect to $\theta$, and relying on the differentiation rule for Bessel functions of the first kind in (vi) of Fact~\ref{fact:bessel_properties}, we obtain 
    \begin{align}
    \begin{split}
    \label{eq:derrel}
        \sum_{k \in \mathbf{Z}} k^2 J_k(x)^2 e^{-ik\theta} & = J_0\round{x\sqrt{2\round{1-\cos \theta}}}\frac{x^2 \sin^2\theta}{2 \round{1-\cos \theta} } + \frac{J_1\round{x\sqrt{2\round{1-\cos \theta}}}}{x\sqrt{2\round{1-\cos \theta}}}\round{\cos \theta - \frac{\sin^2\theta}{1-\cos \theta}}x^2.
    \end{split}
    \end{align}
    Hence, in the limit $\theta \rightarrow 0$, using that $J_1(z)/z \to 1/2$ for $z\to 0$, the relation in \eqref{eq:derrel} yields
    \begin{align*}
         \sum_{k \in \mathbf{Z}} k^2 J_k(x)^2 = \frac{x^2}{2},
    \end{align*}
    which together with \eqref{eq:cs_series} gives the desired result. 
\end{proof} 

We conclude this section with the computations in the real case.

\subsection{DSFF in the real case}
\label{sec:realcase}

By \cite[Theorem 2.2]{cipolloni2021fluctuation} we have (recall that $\kappa_4$ denotes the fourth cumulant of the entries)
\begin{equation}
\begin{split}
\label{eq:usefulrelreal}
\E L_{\N}(f)&=\frac{\N}{\pi}\int_{\mathbf{D}} f(z) d^2z - \frac{\kappa_4}{\pi} \int_{\mathbf{D}} f(z)\round{2|z|^2-1}d^2 z +\frac{1}{4\pi}\int_{\mathbf{D}}\frac{f(\Re z)-f(z)}{(\Im z)^2}\,d^2z +\frac{1}{8\pi} \int_{\mathbf{D} }\Delta f(z)\, d^2z\\
&\quad-\frac{1}{2\pi}\int_0^{2\pi} f(e^{i\theta})\,d\theta+\frac{1}{2\pi}\int_{-1}^1\frac{f(x)}{\sqrt{1-x^2}}\,dx+\frac{f(1)+f(-1)}{4}+O\left(\frac{\lVert \Delta f\rVert_2}{\N^c}\right)\\
\E|L_{\N}(f)|^2&=\frac{1}{2\pi}\int_{\mathbf{D}} \left|\nabla f_{\mathrm{sym}}\right|^2\, d^2z+\sum_{k\in \mathbf{Z}} |k|\left|\widehat{f_{\mathrm{sym}}|_{\partial\mathbf{D}}}(k)\right|^2\\
& \quad +\kappa_4 \left|\frac{1}{\pi}\int_{\mathbf{D}}f(z)d^2 z - \frac{1}{2\pi} \int_0^{2\pi} f(e^{i\theta}) d\theta\right|^2+O\left(\frac{\lVert \Delta f\rVert_2^2}{\N^c}\right),
\end{split}
\end{equation}
for some small fixed $c>0$, with
\[
f_{\mathrm{sym}}(z):=\frac{f(z)+f(\overline{z})}{2}.
\]
Note that the variance of linear eigenvalue statistics depends on the symmetrization of the test function with respect to the real axis; this reflects the fact that for matrices $X$ with real entries the spectrum is symmetric around the real axis.

We now consider $f=f_{\tau,\tau^*}$, with $f_{\tau,\tau^*}$ from (\ref{eq:defftau}). Note that for this choice of $f$ we have
\[
f_{\mathrm{sym}}(x+i y)= e^{itx}\cos(sy).
\]

We omit the computations of the expectation as they are completely analogous to \eqref{eq:first_int_expectation}--\eqref{eq:second_int_expectation}. Next, we compute the first term in the variance \eqref{eq:usefulrelreal}
\begin{equation}
    \begin{split}
    \label{eq:firststepreal}
      \frac{1}{2\pi}\int_{\mathbf{D}} \left|\nabla f_{\mathrm{sym}}\right|^2\, d^2z&= \frac{1}{2\pi}\int_{\mathbf{D}}\left[t^2(\cos (sy))^2+s^2 (\sin(sy))^2\right] \, dxdy \\
      &= \frac{1}{2\pi}\int_0^1\int_0^{2\pi}r\left[t^2(\cos (sr\sin \theta))^2+s^2 (\sin(sr\sin \theta))^2\right] \, d\theta dr.
    \end{split}
\end{equation}
Then, using that for any $a\in\mathbf{R}$ we have 
\[
\int_0^{2\pi} (\cos (a\sin x))^2\, dx= \pi (1+J_0(2a)), \qquad\quad \int_0^{2\pi} (\sin (a\sin x))^2\, dx= \pi (1-J_0(2a)),
\]
by \eqref{eq:firststepreal}, we conclude 
\begin{equation}
    \frac{1}{2\pi}\int_{\mathbf{D}} \left|\nabla f_{\mathrm{sym}}\right|^2\, d^2z=\frac{|\tau|^2}{4}+\frac{t^2-s^2}{8 s^2}\int_0^{2s} xJ_0(x)\, dx.
\end{equation}
Noticing that $\int_0^u xJ_0(x)\, dx= u J_1(u)$ from (i) of Fact \ref{fact:bessel_properties}, this concludes the computations of the first term in the variance in \eqref{eq:usefulrelreal}.

We now proceed with the second term in \eqref{eq:usefulrelreal}. Similarly to \eqref{eq:compint} and \eqref{eq:comph12}, we compute
\begin{equation}
    \begin{split}
\widehat{f_{\mathrm{sym}}|_{\partial\mathbf{D}}}(k)&=\frac{1}{2\pi} \int_0^{2\pi} e^{it \cos \theta}\cos (s\sin \theta) e^{-i\theta k}\, d\theta \\
&=\frac{e^{i\varphi_+ k}}{4\pi}\int_0^{2\pi} e^{i[-\theta k+\sqrt{t^2+s^2}\sin\theta]}\, d\theta+\frac{e^{i\varphi_- k}}{4\pi}\int_0^{2\pi} e^{i[-\theta k+\sqrt{t^2+s^2}\sin\theta]}\, d\theta \\
&=\frac{e^{i\varphi_+ k}+e^{i\varphi_- k}}{2}(-1)^{k+1}J_k(|\tau|),
    \end{split}
\end{equation}
where we defined $\varphi_{\pm}$ so that
\[
t \cos \theta \pm s \sin \theta = \sqrt{t^2 + s^2}\sin\round{\varphi_\pm + \theta},
\]
as done in (\ref{eq:double_angle}). We thus finally obtain
\[
\left|\widehat{f_{\mathrm{sym}}|_{\partial\mathbf{D}}}(k)\right|^2= (\sin(\varphi_+ k))^2 J_k(|\tau|)^2
\]
as a direct consequence of the fact that $\sin \varphi_+ = \sin \varphi_-$, whilst $\cos \varphi_+ = -\cos \varphi_-$. This concludes the proof of Theorem~\ref{thrm:dsff_leading_gen} in the real case as well. Finally, in order to conclude Corollary~\ref{thrm:dsff_leading}, we notice that also in the real case, squaring $e(\tau,\tau^*)$  in\eqref{eq:expvar} and using (iii) of Fact \ref{fact:bessel_properties}, we readily obtain $\left |\E \sum_{i} f_{\tau,\tau^*}(\sigma_i)\right |^2 \sim 4 \N^2J_1(|\tau|)^2/|\tau|^2$; the estimate of the variance is completely analogous to the complex case and so omitted.







\bibliographystyle{plain}
\bibliography{references}

\end{document}